\newtheorem{mydef}{Definition}
\newtheorem{myteo}{Theorem}
\newtheorem{mypro}{Proposition}
\newtheorem{mycor}{Corollary}
\newtheorem{myexe}{Example}
\begin{document}

\title{A Distance Between Channels: the average error of mismatched channels
}


\author{Rafael G. L. D'Oliveira         \and
        Marcelo Firer 
}


\institute{Rafael G. L. D'Oliveira \at
              Illinois Institute of Technology \\
              \email{rgldoliveira@gmail.com}           
           \and
           Marcelo Firer \at
              University of Campinas\\
              \email{mfirer@ime.unicamp.br}
}

\date{Received: date / Accepted: date}

\maketitle

\begin{abstract}
Two channels are equivalent if their maximum likelihood (ML) decoders coincide for every code. We show that this equivalence relation partitions the space of channels into a generalized hyperplane arrangement. With this, we define a coding distance between channels in terms of their ML-decoders which is meaningful from the decoding point of view, in the sense that the closer two channels are, the larger is the probability of them sharing the same ML-decoder. We give explicit formulas for these probabilities.
\keywords{ Mismatched Channels \and Maximum Likelihood Decoding \and Space of Channels}

 \subclass{68P30 \and 	51E22  \and 	52C35 }
\end{abstract}

\section{Introduction}

A communication channel cannot generally be chosen in application, but is rather considered to be a ``fact of life''. The most that is possible is to make measurements to characterize the type of noise and to have a model for errors. 

Some channels are simpler to be handled than others. If a channel is metrizable, for example, one can use methods from classical coding theory which make use of the metric\footnote{For the use of general distances in coding theory see \cite{deza14,gabi12}.}. If, furthermore, the metric is translation invariant one can use syndrome decoding which greatly reduces decoding complexity.

The question of being metrizable is one that underlines many aspects in coding theory, but it is seldom stated in an explicit way, so we do it here: a channel with equal input and output set of messages is said to be \textit{metrizable} if there is a metric such that, for any code, using maximum likelihood or minimum distance leads to the same decoding decisions\footnote{For a deeper look into conditions for metrization see references  \cite{doli16a,doli16d,doli16b,fire14} and \cite{popl16,qure16,segu80}.}.

A metric structure is just one kind of structure that makes a channel more manageable, and sometimes it may be worth to consider an alternative channel model which is less accurate as a model for noise and errors  but is simpler to manage in some sense, like for example with the existence of efficient decoding algorithms. In this sense, the long term goal is to develop an approximation-theory-like approach to coding theory. To do so, the first step would be to determine a distance in the space of channels which translates the probabilistic structure of channels. This is the main goal of this work.

\bigskip

This paper is organized in the following way:

In Section \ref{sec:related work} we discuss the connections between the work presented here with previous work on mismatched decoding and the partial ordering of channels.

In Section \ref{sec:pre} we establishing the notation and basic definitions used in this work. Among these is the notion of \emph{decoding equivalence} (Definition \ref{def:decequ}), a natural equivalence relation between channels. When speaking of the ``space of channels'' we  mean the set of channels under decoding equivalence.

In Section \ref{sec:decequirn} we show that the space of channels has a structure of a special kind of hyperplane arrangement known as the braid arrangement (Theorems \ref{teo:decequibra} and \ref{teo:braidcha}). 

In the literature, a braid arrangement has a natural distance (the Kendall tau distance), but this does not attend our requirements. 

In Sections \ref{sec:disper} and \ref{sec:discha} we present a modified version of the Kendall tau distance which is an appropriate measure from the decoding point of view: channels which are closer with respect to this distance are more probable to perform the same maximum likelihood decoding when considering arbitrary random codes (our main result, Corrolary \ref{cor:disrad}).

\section{Related Work} \label{sec:related work}

The study of the space of channels at its own sake is related (although not equivalent) to other subjects that has been studied, namely mismatched decoding and partial ordering of channels. We give a brief view of these topics, pointing the similarities and differences with our approach.

\subsection{Mismatched Decoding}

Our approximation-theory-like approach is similar to the setting of \emph{mismatched decoding}. In this setting instead of using the Maximum Likelihood decoder determined by the channel $P$ ($ML_{P}%
$-decoding), a different decoding criterion is used. In practice, this might occur due to inaccuracies in the measurement of a channel.  In this case we are using an
$ML_{Q}$-decoding, where $Q$ is the non-accurate measured channel. Another
reason for mismatched decoding arises when there are no reasonable algorithms
for implementing $ML_{P}$-decoding. 

Mismatched decoding has an extensive literature (\cite{gant00} has many relevant references on the subject). The approach, however, is essentially information theoretical,
guided by the fundamental question of determining \emph{what can, in principle, be done}. This means that most of the work in the area aims to understand what is achievable asymptotically, for example, what are the achievable rates for families of channels
with the input-output sets' size going to infinity. Those are very difficult
questions and hence a significant part of the effort is directed to find bounds
for those rates (and other significant invariants). 

Our approach is less concerned with the asymptotic aspects of achievability. Once the input and output sets $X$ and $Y$ are given (and fixed) and supposing that the actual channel is
$P$, how much are we expected to loose once we decode a randomly chosen code
using the ML-decoding criterion determined by a different channel $Q$. Our measure of expected
loss is the overall probability of error in the whole process of encoding,
transmitting, and decoding. In this sense, we may say that we are considering
the mismatched decoding problem in the finite block length regime.

\subsection{Partial Ordering of Channels}

Our approach to study the geometry of the space of channels has an
intersection with the concept of \emph{channel inclusion}, as introduced by
Shannon \cite{shan58} and as presented, for example, by Makur and Polyanskiy \cite{maku16}.
Using the notation of Makur and Polyanskiy, given two channels with transition
matrices $P$ and $Q$ of size $N_{P}\times M_{P}$ and $N_{Q}\times M_{Q}$
respectively, with $N_{Q}\leq N_{P}$ and $M_{Q}\leq M_{P}$, one says that $P$
\emph{includes }$Q$ if there are two families $\left(  A_{k}\right)
_{k=1}^{m}$ and $\left(  B_{k}\right)  _{k=1}^{m}$ of channels (with $A_{k}$
being an $M_{P}\times M_{Q}$ transition matrix and $A_{K}$ an $N_{Q}\times
N_{P}$ transition matrix) and probability mass function $g$ over the set
$\left\{  1,2,...,m\right\}  $ such that
\[
Q=\sum_{k=1}^{m}g\left(  k\right)  B_{k}PA_{k}\text{.}%
\]

This concepts embraces many different situations, some of which can be
understood with our definition of the space of channels with the decoding
equivalence. For example, the first example introduced in Figure 1 of Shannon's
work corresponds to the situation where $m=1$ and both $A_{1}$ and $B_{1}$ is
determined by a projection matrix. If we allow $A_{1}$ and $B_{1}$ to correspond
to a projection or a permutation matrix (or a combination of both), we
actually have a hyperplane, $\mathcal{P}_{\pi}\subset\mathbb{R}^{N\times M}$,
with a braid arrangemente structure (see Section \ref{sec:decequirn}) induced from the braid
arrangement structure of $\mathbb{R}^{N\times M}$, by considering the
intersection of a deconding cone $cone\left(  x\right)  \cap\mathcal{P}$.

\section{Preliminaries} \label{sec:pre}
In this section we start with a list of definitions and notations used throughout this work. Since these concepts are well known we present them  very succinctly, citing references for details. After that, in \ref{subsec:space_of_chan}, we present the decoding equivalence between channels introduced in \cite{doli16a} and define the space of channels.

\subsection{Notation}\label{cones}

We consider the basic setting of information theory \cite{shan48} where a transmitter sends a message to a receiver passing through a channel. Let $X = \{x_1, x_2, \ldots, x_n\}$ be the set of input messages which the transmitter can send and let $Y = \{y_1, y_2, \ldots, y_m\}$ be the set of output messages which the receiver can receive. It is common for the messages to come from some alphabet in which case the sets $X$ and $Y$ are exponential on the block length with respect to the size of the alphabet.

 A \emph{channel} is a $n \times m$ probabilistic matrix $P$ such that $P_{ij} = Pr (y_j \hspace{2pt} \text{received} \hspace{3pt} \vert x_i \hspace{2pt} \text{sent})$, the probability of receiving $y_j$ given that $x_i$ was sent (the rows sum to $1$).

Given a code $\emptyset \neq C \subseteq X$, a \emph{maximum likelihood decoder} is such that $y \in Y$ is decoded as some $c \in C$ which maximizes $ Pr (y \hspace{2pt} \text{received} \hspace{3pt} \vert c \hspace{2pt} \text{sent})$. The set of maximum likelihood (ML) decoders of the  channel $P$ for a code $C$ is denoted by ${D\widehat{e}c}_C (P)$.

A \emph{weak order} over a set $X$ is a triple $(X,\prec,\simeq)$, where $\prec$ and $\simeq$ are binary relations on $X$ satisfying, for all $x,y,z \in X$:
\begin{enumerate}
\item $x \prec y$ and $y \prec z$ implies that $x \prec z$,
\item $\simeq$ is an equivalence relation,
\item exactly one of $x \prec y$, $y \prec x$  or $x \simeq y$ holds.
\end{enumerate}
We denote the set of all weak orders over $n$ objects by $W_n$.

We denote the symmetric group over $n$ objects by $S_n$. As usually done, we use lowercase Greek letters for elements of this set ($\sigma, \tau, \phi \in S_n$). 

A set $ A \subseteq \mathbb{R}^n$ is \textit{convex} if it contains the segment joining any two of its points, i.e. $\alpha x + (1-\alpha)y \in A$ for every $x,y \in A$ and $0 \leq \alpha \leq 1$.

A \textit{hyperplane }is a set $H \subseteq \mathbb{R}^n$ of the form $H = \{ x \in \mathbb{R}^n : \alpha \cdot x  = a \}$ where $0 \neq \alpha \in \mathbb{R}^n$, $a \in \mathbb{R}$ and $\alpha \cdot x := \sum_{i=1}^n \alpha_i x_i $ is the usual dot product.

A \textit{hyperplane arrangement}  $\mathcal{A}$ (see \cite{stan04} for details) is a set of hyperplanes. A region of an arrangement is a connected component of the complement of the hyperplanes, $X = \mathbb{R}^n - \bigcup\limits_{H \in \mathcal{A}} H$. The set of regions is denoted by $\mathcal{R}(\mathcal{A})$ and $r(\mathcal{A}) := \# \mathcal{R}(\mathcal{A})$.

Each hyperplane divides $\mathbb{R}^n$ into two subsets known as \textit{half-spaces}. The two half spaces corresponding to $H = \{ x \in \mathbb{R}^n : \alpha \cdot x  = a \}$ are $\{ x \in \mathbb{R}^n : \alpha \cdot x  \leq a \}$ and $\{ x \in \mathbb{R}^n : \alpha \cdot x  \geq a \}$.

A \textit{convex polytope} is the intersection of a finite  set of half-spaces which is bounded.

A set $C \subseteq \mathbb{R}^n$ is a  \textit{convex cone }if $\alpha x + \beta y \in C$ for every $x,y \in C$ and $\alpha,\beta \geq 0$.

We are particularly interested in the\textit{ braid arrangement,} $\mathcal{B}_n$, which consists of the $\binom{n}{2}$ hyperplanes: $x_i - x_j = 0$ for $1 \leq i < j \leq n$. Specifying to which side of the hyperplane a point $a \in \mathbb{R}^n$ belongs to is equivalent to determining whether $a_i < a_j$ or $a_j < a_i$. Doing so for every hyperplane is equivalent to imposing a linear order on the $a_i$. So to each permutation $\sigma \in S_n$ there corresponds a region $R_{\sigma} \in \mathcal{R}(\mathcal{B}_n)$ given by $R_{\sigma} = \{ x \in \mathbb{R}^n : a_{\sigma (1)} < a_{\sigma (2)} <\ldots < a_{\sigma(n)} \}$. Thus, $r(\mathcal{B}_n) = n!$.

The Iverson bracket will be used in our definitions and proofs: for a statement $S$, the bracket $[S]$ equals $1$ if statement $S$ is true and equals $0$ otherwise.

\subsection{The Space of Channels}\label{subsec:space_of_chan}

The results in this section appear in more detail in \cite{doli16a},\cite{doli16d} and \cite{doli16b}.

Consider the space $\mathbb{R}_{\geq 0}^{n\times m}$ of matrices with non-negative entries. The space of all $n\times m$ channels, ${Ch}_{n\times m}$, is a subset of this space.

\begin{mydef} \label{def:decequ}
	Two channels $P,Q \in {Ch}_{n\times m}$ are decoding equivalent, $P \sim Q$, if, for any code $C \subseteq X$, they have the same maximum likelihood decoders, i.e. for every $C\subseteq X $, ${D\widehat{e}c}_C (P) = {D\widehat{e}c}_C (Q)$.
\end{mydef}

Our next definition will help characterize decoding equivalence.

\begin{mydef}
	Given a matrix $M \in \mathbb{R}_{\geq 0}^{n\times m}$, its \emph{weak order matrix} is the matrix $O^-M$ such that ${(O^-M)}_{ij} = k$ if $M_{ij}$ is the $k$-th largest element (allowing ties) in the $j$-th column of $M$.
\end{mydef}

\begin{myexe}
	If
	$ M = 
	\begin{pmatrix}
	9 & 2 & 1 \\ 
	9 & 7 & 0 \\ 
	8 & 6 & 8
	\end{pmatrix}
	$,
	then
	$O^- M = 
	\begin{pmatrix}
	1 & 3 & 2 \\ 
	1 & 1 & 3 \\ 
	2 & 2 & 1
	\end{pmatrix}
	$.
\end{myexe}

\begin{mypro}
	Two channels $P,Q \in {Ch}_{n\times m}$ are decoding equivalent if and only if $O^-P = O^-Q$.
\end{mypro}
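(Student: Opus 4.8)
The plan is to reduce both implications to a single pairwise statement. First I would observe that $O^-P = O^-Q$ is equivalent to the following column-wise condition: for every column $j$ and every pair of rows $i,i'$, the three-way comparison of $P_{ij}$ and $P_{i'j}$ (strictly less, equal, or strictly greater) agrees with that of $Q_{ij}$ and $Q_{i'j}$. This is immediate from the definition of the weak order matrix, since ranking a column with ties is order preserving: $P_{ij} > P_{i'j}$ iff $(O^-P)_{ij} < (O^-P)_{i'j}$, and $P_{ij} = P_{i'j}$ iff $(O^-P)_{ij} = (O^-P)_{i'j}$. Conversely, the rank $(O^-P)_{ij}$ is recovered as one plus the number of distinct values in column $j$ exceeding $P_{ij}$, which is itself determined by the pairwise comparisons inside that column; so the two formulations are genuinely equivalent.

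Next I would rewrite ML-decoding in these terms. For a code $C \subseteq X$, a map $f \colon Y \to C$ lies in ${D\widehat{e}c}_C (P)$ exactly when, for each $j$, the codeword $f(y_j) = x_i$ satisfies $P_{ij} = \max_{x_{i'} \in C} P_{i'j}$. Whether a given $x_i \in C$ attains this column maximum depends only on the pairwise comparisons of $P_{ij}$ with the entries $P_{i'j}$, $x_{i'} \in C$, which is precisely the data recorded by $O^-P$ in column $j$ on the rows indexed by $C$. It is essential here that ${D\widehat{e}c}_C (P)$ is the full set of such maps, so that ties produce several distinct decoders rather than a single ambiguous one.

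The direction $O^-P = O^-Q \Rightarrow P \sim Q$ is then immediate: fix any $C$; by the reformulation, for each column $j$ the set of rows of $C$ attaining the maximum is the same for $P$ and $Q$, so a map $f$ belongs to ${D\widehat{e}c}_C (P)$ iff it belongs to ${D\widehat{e}c}_C (Q)$, whence the two decoder sets coincide for every code. For the converse I would argue contrapositively. If $O^-P \neq O^-Q$, the reformulation yields a column $j$ and rows $i,i'$ whose comparison differs between $P$ and $Q$. Choosing the two-element code $C = \{x_i, x_{i'}\}$ isolates this comparison at $y_j$: the decoders of $C$ send $y_j$ to $x_i$ alone, to $x_{i'}$ alone, or to either one, according as the relevant entry is strictly larger, strictly smaller, or equal. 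Since these outcomes are mutually exclusive and already differ at the $y_j$-coordinate, the sets ${D\widehat{e}c}_C (P)$ and ${D\widehat{e}c}_C (Q)$ differ, so $P \not\sim Q$.

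The argument is essentially bookkeeping, and the only delicate point is the handling of ties. Because a difference in comparisons may involve an equality on one side (say $P_{ij} > P_{i'j}$ but $Q_{ij} = Q_{i'j}$), one must check that this still forces distinct decoder sets; the two-element code makes exactly this visible, since an equality admits a decoder that the strict inequality forbids. Matching the ``allowing ties'' ranking convention to the strict-versus-equal pairwise comparisons is the single place requiring attention.
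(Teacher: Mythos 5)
Your proof is correct. Note that the paper itself does not prove this proposition at all --- it simply cites Corollary 3 of the reference \cite{doli16a} --- so your self-contained argument is doing genuinely more work than the text you were asked to match. The structure is sound: reducing $O^-P = O^-Q$ to agreement of all pairwise three-way comparisons within each column is exactly right for the ``allowing ties'' (dense) ranking convention used in the paper, as your formula ``rank equals one plus the number of distinct larger values in the column'' confirms against the paper's worked example. The forward direction is indeed immediate once one observes that the argmax set over $C$ in each column depends only on these comparisons. For the converse, the two-element code $C = \{x_i, x_{i'}\}$ is the right witness, and you correctly identify the one delicate case: when the discrepancy is strict-versus-tie (say $P_{ij} > P_{i'j}$ but $Q_{ij} = Q_{i'j}$), the tie admits a decoder sending $y_j$ to $x_{i'}$ that the strict inequality forbids, so the decoder \emph{sets} still differ even though they intersect. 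This relies on ${D\widehat{e}c}_C$ being the full set of maximum likelihood decoders rather than a single canonical one, which matches the paper's definition. No gaps.
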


\begin{proof}
Corollary $3$ in \cite{doli16a}.
\end{proof}

With this the decoding equivalence can be extended to the whole of $\mathbb{R}_{\geq 0}^{n\times m}$ by defining $M \sim N$ if $O^-M = O^-N$.

The decoding equivalence partitions $\mathbb{R}_{\geq 0}^{n\times m}$ into $\vert W_n \vert^m$ cones, $(n!)^m$ of which are full dimensional. We denote the decoding cone containing a matrix $M$ by $Cone(M)$ and note that they are the fibers of $O^-$, i.e.  $Cone = (O^-)^{-1}\circ O^-$. For details see \cite[Section 3]{doli16b}.

As we shall see, the space of channels has a structure of hyperplane arrangements and the simplicial structure of hyperplane arrangements reflects the structure of maximum likelihood decoding.

\section{Decoding Equivalence and the Braid Arrangement} \label{sec:decequirn}

Maximum likelihood decoding is done comparing  entries of a column of a matrix, the column corresponding to the received message.  Considering a column as a vector $x\in\mathbb{R}^n_{\geq 0}$, we show that the decoding equivalence partitions $\mathbb{R}^n_{\geq 0}$ into generalized regions of the braid arrangement. We then extend this result to $\mathbb{R}^{n \times  m}_{\geq 0}$.

We first define the $Order$ function.

\begin{mydef} \label{def:orderfunction}
The $Order$ function, $Order:\mathbb{R}^n_{\geq 0} \rightarrow W_n$, takes a vector $x \in \mathbb{R}^n_{\geq 0}$ to the weak ordering of its coordinates.
\end{mydef}

So, for example, $Order (\sqrt{2},\frac{-1}{2},\sqrt{2}) = Order (2,1,2) = (2 \prec 1 \simeq 3)$.

\begin{mypro} \label{pro:decequiord}
Two vectors $x,y \in \mathbb{R}_{> 0}^{n}$ are decoding equivalent if and only if $Order (x) = Order (y)$.
\end{mypro}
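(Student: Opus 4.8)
The plan is to reduce the statement to the characterization of decoding equivalence already recorded in Section~\ref{subsec:space_of_chan}: extended to all of $\mathbb{R}^n_{\geq 0}$, two vectors are decoding equivalent precisely when they have the same weak order matrix, $O^-x = O^-y$ (this is the single-column, i.e.\ $n\times 1$, instance of the equivalence $M \sim N \iff O^-M = O^-N$). Thus it suffices to prove that $O^-x = O^-y$ if and only if $Order(x) = Order(y)$, i.e.\ that the rank vector $O^-x$ and the weak order $Order(x)$ carry exactly the same combinatorial information about how the coordinates compare.

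The heart of the argument is a single bridging observation relating ranks to values. For indices $i,j$ I would establish
\[
(O^-x)_i = (O^-x)_j \iff x_i = x_j, \qquad (O^-x)_i < (O^-x)_j \iff x_i > x_j.
\]
Both equivalences follow once one unwinds the tie-handling convention in the definition of $O^-$: the rank $(O^-x)_i$ equals one plus the number of \emph{distinct} coordinate values of $x$ that are strictly larger than $x_i$. Granting this, the rank is constant on ties and strictly decreasing in the value, which is exactly the displayed statement. Verifying that ``the $k$-th largest element allowing ties'' is precisely this count of strictly-greater distinct values is the one place that needs care, and it is where I expect the only real (if modest) obstacle to lie; concretely, if $x_i > x_j$ then every distinct value exceeding $x_i$ also exceeds $x_j$, while $x_i$ itself is a distinct value exceeding $x_j$ but not $x_i$, so the count for $j$ is strictly larger than the count for $i$.

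With the bridge in place both directions are immediate. For the forward direction, assume $O^-x = O^-y$. For any $i,j$ the comparison of $(O^-x)_i$ with $(O^-x)_j$ coincides with the comparison of $(O^-y)_i$ with $(O^-y)_j$, so by the bridging equivalences the comparisons of $x_i,x_j$ and of $y_i,y_j$ agree in all three cases; hence the relations $\prec$ and $\simeq$ defining $Order(x)$ and $Order(y)$ coincide, giving $Order(x) = Order(y)$. For the converse, assume $Order(x) = Order(y)$; then $x$ and $y$ order their coordinates identically and have identical equivalence classes of equal coordinates, and since $(O^-x)_i$ depends only on this weak order (it counts the classes lying strictly above the class of $i$, plus one), we obtain $(O^-x)_i = (O^-y)_i$ for every $i$, that is $O^-x = O^-y$. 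The strict positivity of the entries plays no role beyond placing the vectors in the domain of interest, so the same argument goes through verbatim over $\mathbb{R}^n_{\geq 0}$.
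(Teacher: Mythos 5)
Your proposal is correct and follows essentially the same route as the paper: the paper's proof is the one-line observation that $Order(x)=Order(y)$ if and only if $O^-x=O^-y$, combined with the earlier characterization of decoding equivalence via $O^-$. You simply spell out the rank-versus-value bridge (that $(O^-x)_i$ is one plus the number of distinct values exceeding $x_i$, consistent with the paper's worked example) that the paper leaves implicit.
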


\begin{proof}
This follows because $Order (x) = Order (y)$ if and only if $O^- x = O^- y$.
\end{proof}

The \textit{fibers} of the $Order$ function, i.e., the inverse images $Order^{-1}(y)$ , partition $\mathbb{R}^n$ into the decoding equivalence class.

\begin{mydef}
The cone function is given by $Cone : \mathbb{R}^n_{\geq 0} \rightarrow 2^{\mathbb{R}^n_{\geq 0}}$ such that $Cone (x) = ({Order})^{-1} \circ Order$. We call $Cone(x)$ the \emph{decoding cone of} $x$.
\end{mydef}
 It is clear, from the definition, that two channels in the same decoding cone determine the same maximum likelihood criteria, for every code.
 
We generalize the definition of the region of a hyperplane arrangement.

\begin{mydef} \label{def:genreg}
A generalized region of a hyperplane arrangement $\mathcal{A}$ is a connected component of $ 
\bigcap\limits_{H \in \mathcal{A}_1} H - \bigcup\limits_{H \in \mathcal{A}_2} H$, where $\mathcal{A}_1,\mathcal{A}_2$ is a disjoint partition of $\mathcal{A}$. We denote the sets of generalized regions by $\mathcal{GR}(\mathcal{A})$ and $gr(\mathcal{A}) = \# \mathcal{GR}(\mathcal{A})$.
\end{mydef}

 As stated in Section \ref{cones}, the braid arrangement consists of the $\binom{n}{2}$ hyperplanes: $H_{ij} = \{ x \in \mathbb{R}^n : x_i=x_j \}$ for $1 \leq i <j \leq n$. The next theorem shows that the decoding equivalence partitions $\mathbb{R}^n$ into generalized regions of the braid arrangement.

\begin{myteo} \label{teo:decequibra}
Let $x,y \in \mathbb{R}^n_{\geq 0}$. Then, $x$ is decoding equivalent to $y$ if and only if $x,y \in R$ for some $R \in \mathcal{GR}(\mathcal{B}_n)$, where $\mathcal{B}_n$ is the braid arrangement.
\end{myteo}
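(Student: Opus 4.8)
The plan is to reduce the statement to equality of the $Order$ function via Proposition \ref{pro:decequiord} (recalling that, for a single vector, decoding equivalence means exactly $Order(x)=Order(y)$, since this coincides with $O^-x=O^-y$), and then to show that the fibers of $Order$ are precisely the generalized regions of $\mathcal{B}_n$. The bridge between the combinatorial datum $Order(x)$ and the geometry is the observation that $Order(x)=Order(y)$ holds if and only if, for every pair $i<j$, the three-way comparison of $x_i$ with $x_j$ (namely $<$, $=$, or $>$) agrees with that of $y_i$ with $y_j$; each such comparison is exactly the datum of where the point sits relative to the hyperplane $H_{ij}$.

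For the forward implication I would assume $x\sim y$, set $\mathcal{A}_1=\{H_{ij}: x_i=x_j\}$ and $\mathcal{A}_2=\mathcal{B}_n\setminus\mathcal{A}_1$, and observe that both $x$ and $y$ lie in $T:=\bigcap_{H\in\mathcal{A}_1}H-\bigcup_{H\in\mathcal{A}_2}H$: since $Order(x)=Order(y)$, both points lie on exactly the tie-hyperplanes $\mathcal{A}_1$ and off all hyperplanes in $\mathcal{A}_2$. To place them in a common connected component — hence a common generalized region — I would check that the straight segment $z_t=(1-t)x+ty$, $t\in[0,1]$, stays inside $T$: on each $H_{ij}\in\mathcal{A}_1$ the equalities $x_i=x_j$ and $y_i=y_j$ are preserved by the convex combination, while on each $H_{ij}\in\mathcal{A}_2$ the two strict inequalities point the same way (again because $Order(x)=Order(y)$), so $(z_t)_i-(z_t)_j$ keeps its sign and never vanishes.

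For the converse I would take $x,y$ in a single generalized region $R$, a connected component of $\bigcap_{H\in\mathcal{A}_1}H-\bigcup_{H\in\mathcal{A}_2}H$, and argue that $Order$ is constant on $R$. For $H_{ij}\in\mathcal{A}_1$ every point of $R$ satisfies $x_i=x_j$, so these pairs are tied for both $x$ and $y$. For $H_{ij}\in\mathcal{A}_2$ the function $z\mapsto z_i-z_j$ is continuous and nowhere zero on $R$; since $R$ is connected, the intermediate value theorem forces a constant sign, so $\mathrm{sgn}(x_i-x_j)=\mathrm{sgn}(y_i-y_j)$. All pairwise comparisons then agree, giving $Order(x)=Order(y)$ and hence $x\sim y$.

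The only genuinely topological ingredient — and the step I would regard as the heart of the argument — is this constancy-of-sign claim in the converse; everything else is a direct translation between the weak order and the incidence of $x$ with the hyperplanes $H_{ij}$. A minor point to keep clean is the passage from Proposition \ref{pro:decequiord}, stated on $\mathbb{R}^n_{>0}$, to the whole of $\mathbb{R}^n_{\geq 0}$: since decoding equivalence there is defined through $O^-$, and $O^-x=O^-y$ is equivalent to $Order(x)=Order(y)$, this extension is immediate and requires no change to the comparison-by-comparison reasoning above.
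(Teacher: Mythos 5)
Your proof is correct and follows essentially the same route as the paper's: reduce decoding equivalence to equality of $Order$ via Proposition \ref{pro:decequiord}, then identify the generalized regions of $\mathcal{B}_n$ with the fibers of $Order$ through the pairwise comparisons $x_i<x_j$, $x_i=x_j$, $x_i>x_j$. You simply make explicit the topological step the paper leaves implicit (convexity of each sign-pattern cell for one direction, constancy of the sign of $z_i-z_j$ on a connected component for the other), which is a welcome but not substantively different elaboration.
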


\begin{proof}
Specifying to which generalized region $R_x \in \mathcal{GR}(\mathcal{B}_n)$ a point $x \in \mathbb{R}^n$ belongs to is equivalent to determining whether $x_i < x_j$, $x_i = x_j$ or $x_i > x_j$ for every $1 \leq i <j \leq n$. This is equivalent to imposing a weak order on the coordinates of $x$. But this implies that $y \in R_x$ if and only if $Order(y) = Order(x)$. The result then follows from Proposition \ref{pro:decequiord}.
\end{proof}

In other words, if $R \in \mathcal{GR}(\mathcal{B}_n)$ then $x \in R$ if and only if $R = Cone(x)$, i.e. the decoding cones are the generalized regions of the braid arrangement.

\begin{figure}[h] \label{fig:cone}
\centering
\includegraphics[scale=0.2]{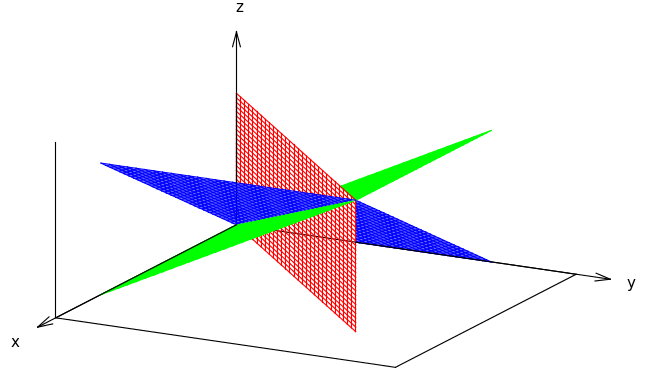}
\caption{The partition of $\mathbb{R}_{> 0}^3$ by decoding equivalence into $13$ cones: six $3$-dimensional, six $2$-dimensional, and one $1$-dimensional (the ray  $(\lambda,\lambda,\lambda)$ with $\lambda > 0)$.}
\end{figure}

We now extend the results from $\mathbb{R}^{n}$ to $\mathbb{R}^{n \times m}$.

\begin{mydef} \label{def:ordmat}
The $Order$ function, $Order:\mathbb{R}^{n \times m}_{\geq 0} \rightarrow W_n^m$, is defined as 
\[ Order (M) = Order (M[\cdot][1]) \times  Order (M[\cdot][2]) \times  ... \times  Order (M[\cdot][m]) ,\]
where $Order (M[\cdot][j])$ is the the order function in Definition \ref{def:orderfunction} applied to the $j$-th column of $M$. The decoding cone of $M$ is $Cone(M) = {Order}^{-1} \circ {Order} (M)$.
\end{mydef}

The following result is an analog of Theorem \ref{teo:decequibra}.

\begin{myteo} \label{teo:braidcha}
Let $M,M' \in \mathbb{R}^{n \times m}$. Then, $M$ is decoding equivalent to $M'$ if and only if $M[\cdot][j],M'[\cdot][j] \in R_j$ for some $R_j \in \mathcal{GR}(\mathcal{B}_n)$, where $\mathcal{B}_n$ is the braid arrangement.
\end{myteo}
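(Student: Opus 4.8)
The plan is to reduce this matrix statement to the already-established vector statement of Theorem \ref{teo:decequibra}, exploiting the fact that both the decoding equivalence and the $Order$ function decompose column by column. The whole proof is a lifting argument; the only thing that needs to be made explicit is that nothing couples distinct columns together.

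First I would rewrite decoding equivalence in terms of the weak order matrix. By the Proposition stating that $P\sim Q$ if and only if $O^-P=O^-Q$ (and recalling that the equivalence is extended to all of $\mathbb{R}_{\geq 0}^{n\times m}$ precisely by declaring $M\sim M'$ when $O^-M=O^-M'$), we have $M\sim M'$ if and only if $O^-M=O^-M'$. Now the key observation: since $(O^-M)_{ij}=k$ records the rank of $M_{ij}$ within the $j$-th column only, the $j$-th column of $O^-M$ is exactly $O^-$ applied to the $j$-th column of $M$, i.e. $(O^-M)[\cdot][j]=O^-(M[\cdot][j])$. Consequently the matrix identity $O^-M=O^-M'$ holds if and only if $O^-(M[\cdot][j])=O^-(M'[\cdot][j])$ for every $1\leq j\leq m$. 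This is the crux of the argument, turning one matrix equivalence into a conjunction of $m$ independent vector equivalences, one per column, which is also what legitimizes the product structure in Definition \ref{def:ordmat}.

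Next, for each fixed column index $j$ I would invoke the vector-level results. By Proposition \ref{pro:decequiord} (and its proof, which records that $Order(x)=Order(y)$ iff $O^-x=O^-y$), the equality $O^-(M[\cdot][j])=O^-(M'[\cdot][j])$ is equivalent to $M[\cdot][j]$ and $M'[\cdot][j]$ being decoding equivalent as vectors in $\mathbb{R}^n_{\geq 0}$. Theorem \ref{teo:decequibra} then states exactly that two decoding-equivalent vectors lie in a common generalized region of the braid arrangement, so $M[\cdot][j]$ and $M'[\cdot][j]$ belong to the same $R_j\in\mathcal{GR}(\mathcal{B}_n)$. Chaining these equivalences over all $j$ produces both implications of the theorem simultaneously.

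I do not anticipate a genuine obstacle, since once the column-wise factorization is recorded the result follows by routine chaining of equivalences already proved. The single point that merits care is verifying that $O^-$ (equivalently $Order$) acts separately on each column, so that the conjunction over $j$ is faithful and no interaction between distinct columns can arise. This column independence is the conceptual content underlying the later identification of the space of channels with a product of braid arrangements.
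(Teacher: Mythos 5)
Your argument is correct and follows the same route as the paper, which simply notes that the result reduces to Theorem \ref{teo:decequibra} applied column by column via the product structure of Definition \ref{def:ordmat}; you have merely written out explicitly the column-wise factorization of $O^-$ that the paper leaves implicit. No gap.
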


\begin{proof}
The proof is equivalent to that of Theorem \ref{teo:decequibra} by using Definition \ref{def:ordmat}.
\end{proof}

\section{A Decoding Distance Between Permutations} \label{sec:disper}

Having an appropriate model of the transmission channel is not always good enough to establish all the necessities in the communication process. Many other questions, such as the complexity of the decoding algorithms, need to be taken into consideration. For this reason, for example, the Hamming metric is many times used, even when the channel is not the binary symmetric channel.

In this sense, it may be interesting to develop an ``approximation theory'' for channels. The idea is that we can use an approximate simpler channel (or distance matched to it) in place of the original one.

The most basic and mandatory tool for the development of an approximation theory is a distance in  the space $Cha_{n\times m}$ which is adequate in some sense. If $P$ is a channel, $P_y$ denotes the column corresponding to $y$ being a received message. We will propose a relevant distance on $Cha_{n\times m}$  which relates to the following:

\begin{center}
	
\emph{Let $P , Q \in {Cha}_{n\times  m}$ be two different channels and suppose we know what output $y \in Y$ is received. Choosing a code $C \subseteq X$ from the set of all codes with uniform distribution, what is the probability that ${D\widehat{e}c}_C (P_y) \cap {D\widehat{e}c}_C (Q_y) \neq \emptyset$?}

\end{center}

When we say that the distance is related to that question it means that the probability that ${D\widehat{e}c}_C (P_y) \cap {D\widehat{e}c}_C (Q_y) \neq \emptyset$ decreases with the purposed distance: the closer the channels are, the more probable they are to determine the same decoders.

Since we know what output $y$ is received, only its corresponding column matters for decoding. Thus we are dealing with the decoding equivalence in $\mathbb{R}^n$.

We will only consider the cases for which $Cone(P_y)$ and $Cone(Q_y)$ are $n$-dimensional (and leave the general case for future work). We say that a channel $P$ such that $Cone(P)$ is full dimensional is a \emph{stable channel}, since small perturbations of the channel probabilities do not affect the decoding decisions. In this case ${D\widehat{e}c}_C (P_y) \cap {D\widehat{e}c}_C (Q_y) \neq \emptyset$ is equivalent to ${D\widehat{e}c}_C (P_y) = {D\widehat{e}c}_C (Q_y)$.

By Theorem \ref{teo:decequibra}, each $n$-dimensional decoding cone corresponds to a region of the braid arrangement $\mathcal{B}_n$. As noted in Section \ref{cones} to each $\sigma \in S_n$ there corresponds a region $R_\sigma \in \mathcal{R}(\mathcal{B}_n)$. We can therefore identify every $n$-dimensional decoding cone with a permutation in $S_n$.

\begin{myexe}
Consider $\mathbb{R}_{\geq 0}^3$. The identity element $1 \in S_3$ corresponds to the cone with ordering $(1 \prec 2 \prec 3)$. The transposition $(13)\in S_3$ corresponds to the cone with ordering $(3 \prec 2 \prec 1)$.
\end{myexe}

Since decoding depends exclusively on the decoding cone, we can extend the definition of ${D\widehat{e}c}_C$ to permutations in the following way.

\begin{mydef}
Let $\sigma \in S_n$, $R_{\sigma} \in \mathcal{R}(\mathcal{B}_n)$ its corresponding decoding cone and $P \in {Cha}_{n\times m}$ such that $P \in R_{\sigma}$. We define ${D\widehat{e}c}_C (\sigma) = {D\widehat{e}c}_C (P)$ for every $C \subseteq X$.
\end{mydef}

The leading question we posed in the beginning of this section can now be restated in terms of permutation groups as follows:

\begin{center}
Given two permutations $\sigma, \phi \in S_n$, what is the probability that ${D\widehat{e}c}_C (\sigma) = {D\widehat{e}c}_C (\phi)$ if $C \subseteq X$ is chosen with uniform distribution?
\end{center}

More precisely, we are interested in computing the following distance:

\begin{mydef} \label{def:decdist}
The decoding distance between two permutations $\sigma, \phi \in S_n$ is 
\[ d_{dec} (\sigma, \phi) = 1 - Pr( \hspace{3pt}{D\widehat{e}c}_C (\sigma) = {D\widehat{e}c}_C (\phi)\hspace{3pt} ),\]  where a code $C\subseteq X$ is chosen randomly, with uniform distribution.
\end{mydef}

We will solve this by elementary counting.

\begin{mydef}
Let $\sigma, \phi \in S_n$. We denote by $S(\sigma, \phi)$ the number of codes $C$ for which ${D\widehat{e}c}_C (\sigma) = {D\widehat{e}c}_C (\phi)$.
\end{mydef}

 We aim to relate $S(\sigma, \phi)$ to $ d_{dec} (\sigma, \phi)$.   We first remark that  $\sigma, \phi \in S_n$ is invariant by permutations.

\begin{mypro}
Let $\sigma, \phi, \tau \in S_n$. Then, $S(\tau \circ \sigma, \tau \circ \phi) = S(\sigma, \phi)$.
\end{mypro}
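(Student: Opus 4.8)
The plan is to exhibit a bijection on the set of codes that carries the decoding condition for the pair $(\tau \circ \sigma, \tau \circ \phi)$ to the decoding condition for the pair $(\sigma, \phi)$; counting the codes on each side then yields the claimed equality. Since we are in the setting where a single received output has been fixed and the relevant cones are full dimensional (stable channels), for a permutation $\sigma \in S_n$ and a code $C \subseteq X$ the decoder ${D\widehat{e}c}_C (\sigma)$ returns the unique codeword of $C$ whose coordinate is largest in the ordering defining $R_\sigma$. Concretely, identifying $C$ with a subset of $\{1, \ldots, n\}$, it returns the index $i \in C$ maximizing $\sigma^{-1}(i)$, i.e. the index of $C$ occurring last in the sequence $\sigma(1), \ldots, \sigma(n)$.

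First I would record this concrete description of ${D\widehat{e}c}_C (\sigma)$ as the maximizer of $\sigma^{-1}$ over $C$; this is where full dimensionality is used, guaranteeing a unique maximizer and hence that ${D\widehat{e}c}_C$ is single-valued. Next, the key step is the intertwining identity
\[ {D\widehat{e}c}_C (\tau \circ \sigma) = \tau\big( {D\widehat{e}c}_{\tau^{-1}(C)} (\sigma) \big), \]
valid for every code $C$. This follows by substituting $(\tau \circ \sigma)^{-1} = \sigma^{-1} \circ \tau^{-1}$ into the concrete description and reindexing the maximization by $j = \tau^{-1}(i)$: maximizing $\sigma^{-1}(\tau^{-1}(i))$ over $i \in C$ is the same as maximizing $\sigma^{-1}(j)$ over $j \in \tau^{-1}(C)$, and then applying $\tau$ recovers the maximizing index $i = \tau(j)$.

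With this identity in hand, and the analogous one for $\phi$, the condition ${D\widehat{e}c}_C (\tau \circ \sigma) = {D\widehat{e}c}_C (\tau \circ \phi)$ becomes $\tau\big( {D\widehat{e}c}_{\tau^{-1}(C)} (\sigma) \big) = \tau\big( {D\widehat{e}c}_{\tau^{-1}(C)} (\phi) \big)$, which, since $\tau$ is a bijection, is equivalent to ${D\widehat{e}c}_{C'} (\sigma) = {D\widehat{e}c}_{C'} (\phi)$ for $C' := \tau^{-1}(C)$. Finally I would observe that $C \mapsto \tau^{-1}(C)$ is a bijection from the set of codes to itself, since it preserves nonemptiness and cardinality. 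Thus it sets up a bijection between the codes counted by $S(\tau \circ \sigma, \tau \circ \phi)$ and those counted by $S(\sigma, \phi)$, giving $S(\tau \circ \sigma, \tau \circ \phi) = S(\sigma, \phi)$.

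The main obstacle is purely bookkeeping: getting the placement of $\tau$ versus $\tau^{-1}$ correct in the intertwining identity, and ensuring the relabeling of codes is by $\tau^{-1}$ rather than $\tau$. Everything else — the bijectivity of $C \mapsto \tau^{-1}(C)$ and the single-valuedness of the decoder — is immediate once the full-dimensional (stable) regime is invoked.
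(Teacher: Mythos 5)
Your proof is correct and is essentially an expanded version of the paper's one-sentence argument, which appeals to the same symmetry (permuting the rows of the channel permutes the ML decoders accordingly); you simply make explicit the intertwining identity and the code-relabeling bijection $C \mapsto \tau^{-1}(C)$ that the paper leaves implicit. The only cosmetic discrepancy is one of convention: the paper's later computation of $S(\sigma)$ treats the decoder as returning the element of $C$ \emph{minimizing} $\sigma^{-1}$ (so the identity permutation always decodes to $x_1$), whereas you maximize, but your argument is insensitive to this choice.
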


\begin{proof}
This follows from the fact that if you permute the rows of a channel, the same permutation on a maximum likelihood decoder of it will yield a maximum likelihood decoder of the permuted channel.
\end{proof}

Thus, we can define $S(\sigma) = S(1,\sigma)$ and then $S(\sigma, \phi) = S(\phi^{-1} \circ \sigma)$.

We now show how to compute this function.

\begin{myteo}\label{f_i}
Let $\sigma \in S_n$ and let us define $f_i(\sigma) = \sum_{j=i+1}^n [ \sigma^{-1} (i) \leq \sigma^{-1} (j) ]$. Then, $$S(\sigma) = \sum_{i=1}^n 2^{f_i(\sigma)}.$$

\end{myteo}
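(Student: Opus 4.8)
The plan is to reduce everything to a direct count of codes and to organize that count by the codeword to which the two decoders decide. By the invariance $S(\tau\circ\sigma,\tau\circ\phi)=S(\sigma,\phi)$ established just above, it suffices to compute $S(\sigma)=S(1,\sigma)$, the number of codes on which the identity and $\sigma$ decode identically. First I would unwind the definition of ${D\widehat{e}c}_C$ on a single column. Since $\sigma$ names the region $R_\sigma=\{a_{\sigma(1)}<\cdots<a_{\sigma(n)}\}$, a channel lying in $R_\sigma$ assigns its received symbol, under maximum likelihood, to the codeword of $C$ whose coordinate is largest, i.e. to the element $c\in C$ maximizing $\sigma^{-1}(c)$; for the identity this is simply the largest-index element $\max C$. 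Hence ${D\widehat{e}c}_C(1)={D\widehat{e}c}_C(\sigma)$ holds precisely when the maximal-index element of $C$ coincides with the $\sigma$-maximal element of $C$.

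Next I would partition the family of codes on which the two decoders agree according to the common decided codeword $i$. A nonempty code $C$ falls in block $i$ exactly when $i\in C$, $i=\max C$, and $i$ is $\sigma$-maximal in $C$; equivalently $C=\{i\}\cup T$, where $T$ ranges freely over the subset $A_i\subseteq X$ of those codewords that $i$ dominates simultaneously in the index order and in the order induced by $\sigma$. These blocks are pairwise disjoint and their union is the whole agreeing family, so each contributes independently and
\[
S(\sigma)=\sum_{i=1}^{n}2^{\,|A_i|}.
\]
It then remains to evaluate $|A_i|$ in closed form: writing the relevant domination conditions with Iverson brackets turns $|A_i|$ into a sum of indicators over the remaining codewords, and this is the quantity $f_i(\sigma)=\sum_{j=i+1}^{n}[\,\sigma^{-1}(i)\le\sigma^{-1}(j)\,]$ appearing in the statement. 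Substituting it into the displayed sum yields $S(\sigma)=\sum_{i=1}^{n}2^{f_i(\sigma)}$.

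The routine parts are the power-of-two count within a single block and the final summation, which is a purely formal addition once the blocks are described. The delicate step — and the one I would treat most carefully — is pinning down the compatibility set $A_i$: deciding exactly which codewords may be adjoined to $\{i\}$ without disturbing either decision, getting the strict-versus-nonstrict inequalities right, and confirming that partitioning by the decided codeword is genuinely a partition (so that each agreeing code is counted once and only once). This is precisely the point at which the geometric description of the braid regions $R_\sigma$ must be matched to the combinatorial bracket count $f_i$, and it is where an off-by-one in direction or index would silently corrupt the formula; I would therefore verify the identification $|A_i|=f_i(\sigma)$ on small permutations (say in $S_3$) before committing to it.
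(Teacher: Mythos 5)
Your counting skeleton is exactly the paper's: partition the codes on which the two decoders agree by the common decoded codeword $i$, observe that each block has the form $\{i\}\cup T$ with $T$ ranging over subsets of a compatibility set $A_i$, and sum $2^{|A_i|}$. The gap is precisely at the step you flagged as delicate, and it is not merely an off-by-one: under the orientation you chose (ML picks the element of $C$ maximizing $\sigma^{-1}$, so the identity decodes $\max C$), the compatibility set is $A_i=\{\,j<i:\sigma^{-1}(j)\le\sigma^{-1}(i)\,\}$, so $|A_i|=\sum_{j=1}^{i-1}[\sigma^{-1}(j)\le\sigma^{-1}(i)]$, which is \emph{not} $f_i(\sigma)=\sum_{j=i+1}^{n}[\sigma^{-1}(i)\le\sigma^{-1}(j)]$ — the sum runs over the wrong side of $i$ and the inequality is mirrored. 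This changes the answer: for $\sigma=(1\,2)\in S_3$ your block sizes are $(0,0,2)$, giving $1+1+4=6$, while the theorem's formula gives $2+2+1=5$; a direct enumeration of the seven nonempty codes under the paper's convention (where the identity decodes every code containing $x_1$ as $x_1$, i.e., rank $1$ in $O^-$ marks the \emph{most} likely input) confirms $5$. So the small-case check you promised would have rejected your identification $|A_i|=f_i(\sigma)$.

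The fix is to orient the argument as the paper does: the identity's decoder outputs the minimal-index element of $C$, block $i$ consists of the codes whose minimal index is $i$, and agreement forces every other $x_j\in C$ (necessarily $j>i$) to satisfy $\sigma^{-1}(i)\le\sigma^{-1}(j)$, so $T$ ranges over subsets of $\{\,j>i:\sigma^{-1}(i)\le\sigma^{-1}(j)\,\}$ and the block contributes $2^{f_i(\sigma)}$. (In fairness, the paper's own conventions are not entirely consistent — the display $R_\sigma=\{a_{\sigma(1)}<\cdots<a_{\sigma(n)}\}$ does suggest your reading — but the theorem's formula is tied to the ``rank $1$ is most likely'' convention used in its proof, and with your convention the stated identity is simply false for some $\sigma$.)
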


\begin{proof}
We want to count how many codes such that ${D\widehat{e}c}_C (1) = {D\widehat{e}c}_C (\sigma)$. The identity element represents $(1\prec 2\prec \ldots \prec n)$ and $\sigma$ represents $(\sigma^{-1}(1)\prec \sigma^{-1}(2)\prec \ldots \prec \sigma^{-1}(n))$.

Recall that $\sigma$ corresponds to a channel $P$ with input messages $X= \{ x_1, \ldots, x_n \}$. Consider codes $C$ such that $x_1 \in C$. The identity element $1$ will decode any one of these as $x_1$. Thus ${D\widehat{e}c}_C (1) = {D\widehat{e}c}_C (\sigma)$ if and only if $\sigma$ also decodes as $x_1$. For this to happen, $C$ can only contain elements $x_i$ such that $\sigma^{-1} (i) \leq \sigma^{-1} (j)$. But $f_1(\sigma)$ counts precisely how many of these exist. So the total number of codes satisfying $x_1 \in C$ and ${D\widehat{e}c}_C (1) = {D\widehat{e}c}_C (\sigma)$ is $2^{f_1(\sigma)}$.

Now consider codes $C$ such that $x_1 \notin C$ and $x_2 \in C$. The same reasoning yields the total number of codes satisfying $x_1 \notin C$, $x_2 \in C$ and ${D\widehat{e}c}_C (1) = {D\widehat{e}c}_C (\sigma)$ as $2^{f_2(\sigma)}$.

Continuing  with the same argument yields our result.
\end{proof}

The next theorem answers the question posed in the beginning of this section.

\begin{myteo}

Let $\sigma, \phi \in S_n$. If a code $C \subseteq X$ is picked uniformly distributed from the space of all codes, then $Pr({D\widehat{e}c}_C (\sigma) = {D\widehat{e}c}_C (\phi)) = \frac{S(\phi^{-1} \circ \sigma)}{2^n - 1}$.

\end{myteo}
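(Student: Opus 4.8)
The plan is to reduce the statement to a direct application of the definition of the uniform distribution on a finite sample space, so the argument is essentially a counting one. First I would fix the sample space. Recall from Section \ref{sec:pre} that a code is a nonempty subset $\emptyset \neq C \subseteq X$, and since $\vert X \vert = n$, the number of codes is exactly the number of nonempty subsets of an $n$-element set, namely $2^n - 1$. Because $C$ is drawn with the uniform distribution, each individual code occurs with probability $\tfrac{1}{2^n - 1}$, and the probability of any event is the number of codes realizing it divided by $2^n - 1$.

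Second, I would identify the event in question with a concrete subset of this sample space. The event $\{ {D\widehat{e}c}_C (\sigma) = {D\widehat{e}c}_C (\phi) \}$ consists precisely of those codes $C$ for which the two decoders agree, and by the very definition of $S(\sigma, \phi)$ the cardinality of this set is $S(\sigma,\phi)$. Combining this with the count of the sample space gives
\[
Pr\big( {D\widehat{e}c}_C (\sigma) = {D\widehat{e}c}_C (\phi) \big) = \frac{S(\sigma,\phi)}{2^n - 1}.
\]

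Third, I would invoke the invariance established earlier, namely that $S(\sigma,\phi) = S(\phi^{-1} \circ \sigma)$ (which follows from the Proposition showing $S(\tau \circ \sigma, \tau \circ \phi) = S(\sigma,\phi)$, taking $\tau = \phi^{-1}$ and using $S(\sigma) = S(1,\sigma)$). Substituting this into the displayed equation yields the claimed formula $\tfrac{S(\phi^{-1} \circ \sigma)}{2^n - 1}$, which completes the proof.

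There is no genuine obstacle here: the content of the theorem is already carried by the definition of $S(\sigma,\phi)$ and by Theorem \ref{f_i} (which tells us how to evaluate $S(\phi^{-1}\circ\sigma)$ explicitly). The only point requiring mild care is correctly pinning down the size of the sample space, i.e.\ remembering that codes are \emph{nonempty} subsets of $X$, which is what produces the denominator $2^n - 1$ rather than $2^n$; everything else is the standard passage from counts to uniform probabilities.
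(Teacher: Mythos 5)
Your proposal is correct and follows essentially the same route as the paper's own proof: identify the number of codes satisfying ${D\widehat{e}c}_C (\sigma) = {D\widehat{e}c}_C (\phi)$ with $S(\sigma,\phi) = S(\phi^{-1}\circ\sigma)$ and divide by the total number $2^n-1$ of (nonempty) codes. You merely spell out the sample-space count and the invariance step a bit more explicitly than the paper does.
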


\begin{proof}
By definition, $S(\phi^{-1} \circ \sigma)$ counts the number of codes such that ${D\widehat{e}c}_C (\sigma) = {D\widehat{e}c}_C (\phi)$. Elementary probability says we must divide this by the total number of codes.
\end{proof}

As a direct corollary we compute the decoding distance.

\begin{mycor} \label{cor:decodingdist}
The decoding distance between two permutations $\sigma, \phi \in S_n$ is 
\[ d_{dec} (\sigma, \phi) = 1 - \frac{S(\phi^{-1} \circ \sigma)}{2^n - 1}\]
\end{mycor}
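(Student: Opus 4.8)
The plan is to obtain the corollary by direct substitution, since all the substantive work has already been carried out in the preceding results. Starting from Definition \ref{def:decdist}, the decoding distance is defined as
\[ d_{dec}(\sigma, \phi) = 1 - Pr({D\widehat{e}c}_C(\sigma) = {D\widehat{e}c}_C(\phi)), \]
where the code $C$ is drawn uniformly at random from the space of all codes. The only missing ingredient is an explicit expression for the probability on the right-hand side, and that is exactly what the theorem immediately preceding this corollary supplies.

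First I would invoke that theorem, which gives the probability as $\frac{S(\phi^{-1}\circ\sigma)}{2^n-1}$. Substituting this into the definition yields the claimed formula. There is essentially no obstacle at this stage: the combinatorial heavy lifting lives entirely in Theorem \ref{f_i}, where $S(\sigma)$ is computed as $\sum_{i=1}^n 2^{f_i(\sigma)}$, and in the translation-invariance proposition that reduces the two-permutation quantity $S(\sigma,\phi)$ to the single-permutation quantity $S(\phi^{-1}\circ\sigma)$. Given those, the corollary is immediate.

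If I wanted the proof to be self-contained rather than merely citing the preceding theorem, I would reconstruct the probability as the ratio of favorable codes to total codes. By definition $S(\phi^{-1}\circ\sigma)$ counts the codes on which the two ML-decoders agree, while $2^n - 1$ counts all nonempty codes $C \subseteq X$, the denominator reflecting the convention $\emptyset \neq C$ built into the notion of a code. Dividing the former by the latter gives the agreement probability, and subtracting from $1$ then recovers the stated expression for $d_{dec}(\sigma,\phi)$.
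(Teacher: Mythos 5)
Your proposal is correct and matches the paper's own treatment: the corollary is obtained by substituting the probability $\frac{S(\phi^{-1}\circ\sigma)}{2^n-1}$ from the immediately preceding theorem into Definition~\ref{def:decdist}. Your optional self-contained elaboration (favorable codes over the $2^n-1$ nonempty codes) is also consistent with how the paper proves that preceding theorem.
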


In the context of the braid arrangement there exists already a natural distance between permutations. It is known as the Kendall tau distance \cite{kend38}, which we denote by $d_{\tau} (\sigma, \phi)$, and is defined as the minimum number of adjacent permutations $\tau_1, \tau_2, \ldots, \tau_{d_{\tau} (\sigma, \phi)}$ so that $\phi = \sigma \circ \tau_1 \circ \tau_2 \circ \ldots \circ \tau_{d_{\tau} (\sigma, \phi)}$. 

Consider the graph whose vertices are the regions of the braid arrangement and such that two vertices share an edge if their corresponding regions are adjacent to each other (so that each edge corresponds to a hyperplane). Then, the Kendall tau distance is the shortest path distance of the graph. 

In technical terms: if $\sigma, \tau \in S_n$ where $\tau = (r,r+1)$ and $1$ is the identity in $S_n$, then
\[ d_{\tau} (1, \tau \circ \sigma) - d_{\tau} (1, \sigma) = \left\{\begin{matrix}
1 & \hspace{5pt} if \hspace{5pt} \sigma^{-1} (r) < \sigma^{-1} (r+1) \\ 
-1 & \hspace{5pt} if \hspace{5pt} \sigma^{-1} (r) > \sigma^{-1} (r+1) 
\end{matrix}\right. \]

We now show that the decoding distance behaves as a weighted version of the Kendall tau distance.  We remark that the function $f_i(\sigma) = \sum_{j=i+1}^n [ \sigma^{-1} (i) \leq \sigma^{-1} (j) ]$ used in Theorem \ref{f_i} to describe the function $S( \sigma )$ differs from the Kendall tau distance, since it considers not only the number of transpositions $( r,r+1)  $ but also the value of $r$: $f_r(\tau_r )=n-r-1$. 
As we shall see on the following theorem, diferentely from the Kendall tau distance where $|d_{\tau} (1, \tau \circ \sigma_r) - d_{\tau} (1, \sigma)|=1$ independently of $r$, the difference $|S (\tau_r \circ \sigma) - S (\sigma)|$ decreases with $r$.

\begin{myteo} \label{teo:trans}

Let $\sigma, \tau \in S_n$ where $\tau :=\tau_r= (r,r+1)$. Then,

\[ S (\tau \circ \sigma) - S (\sigma) = \left\{\begin{matrix}
- 2^{f_r(\sigma) - 1} & \hspace{5pt} if \hspace{5pt} \sigma^{-1} (r) < \sigma^{-1} (r+1) \\ 
2^{f_{r+1}(\sigma)} & \hspace{5pt} if \hspace{5pt} \sigma^{-1} (r) > \sigma^{-1} (r+1) 
\end{matrix}\right. \]
\end{myteo}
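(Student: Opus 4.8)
The plan is to use the explicit formula from Theorem~\ref{f_i}, namely $S(\sigma) = \sum_{i=1}^n 2^{f_i(\sigma)}$, and track precisely how each summand $2^{f_i(\sigma)}$ changes when we pass from $\sigma$ to $\tau_r \circ \sigma$. Since $\tau_r = (r,r+1)$ acts on the left, it swaps the roles of the values $r$ and $r+1$ in the weak order represented by $\sigma$; concretely, $(\tau_r \circ \sigma)^{-1} = \sigma^{-1} \circ \tau_r$, so the positions satisfy $(\tau_r \circ \sigma)^{-1}(k) = \sigma^{-1}(\tau_r(k))$. Thus applying $\tau_r$ exchanges $\sigma^{-1}(r)$ and $\sigma^{-1}(r+1)$ and leaves $\sigma^{-1}(k)$ unchanged for $k \neq r, r+1$. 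The first step is to write down $f_i(\tau_r \circ \sigma)$ in terms of $f_i(\sigma)$ for each $i$ using the definition $f_i(\sigma) = \sum_{j=i+1}^n [\sigma^{-1}(i) \leq \sigma^{-1}(j)]$.

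The key observation I would establish is that almost all the terms $f_i$ are unaffected by the swap, so the difference $S(\tau_r \circ \sigma) - S(\sigma)$ collapses to contributions from just a few indices $i$. For $i \notin \{r, r+1\}$, the value $\sigma^{-1}(i)$ is unchanged, and the indicator $[\sigma^{-1}(i) \leq \sigma^{-1}(j)]$ can only change for $j \in \{r, r+1\}$; but when both $r, r+1 > i$ (or a symmetric bookkeeping), the two swapped values appear together in the sum and their total contribution $[\cdot \leq \sigma^{-1}(r)] + [\cdot \leq \sigma^{-1}(r+1)]$ is invariant under exchanging $\sigma^{-1}(r)$ and $\sigma^{-1}(r+1)$. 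So $f_i$ is unchanged for all $i \neq r$. This reduces the problem to comparing $f_r$ and $f_{r+1}$ before and after the swap.

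The main computation, which I expect to be the crux, is the analysis of the index $i = r$ (and the interplay with $i = r+1$). By definition $f_r(\sigma)$ sums $[\sigma^{-1}(r) \leq \sigma^{-1}(j)]$ over $j \geq r+1$, so the term $j = r+1$ contributes the indicator $[\sigma^{-1}(r) \leq \sigma^{-1}(r+1)]$, while $f_{r+1}(\sigma)$ does not see $r$ at all since its sum starts at $j = r+2$. I would split into the two cases of the theorem. When $\sigma^{-1}(r) < \sigma^{-1}(r+1)$, the $j=r+1$ term in $f_r(\sigma)$ equals $1$, and after the swap it becomes $0$ while the remaining terms of $f_r$ (those with $j \geq r+2$, where the exchange of the two values leaves the indicator sum fixed as argued above) are preserved; one then checks that $f_{r+1}$ and all other summands are likewise unchanged, so the net effect on $S$ is the loss of exactly half of the single summand $2^{f_r(\sigma)}$, giving $-2^{f_r(\sigma)-1}$. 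In the opposite case $\sigma^{-1}(r) > \sigma^{-1}(r+1)$, the swap instead turns a $0$ into a $1$ in the relevant position, and tracing which summand gains the extra factor of $2$ yields the increment $+2^{f_{r+1}(\sigma)}$.

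The main obstacle will be the careful bookkeeping of the indicator terms for the indices $r$ and $r+1$ simultaneously: one must verify that the term that changes in $f_r$ is not double-counted against a compensating change in $f_{r+1}$, and that the asymmetry between the two cases (why one answer is phrased via $f_r$ and the other via $f_{r+1}$) comes out correctly. This asymmetry is genuine and traces back to the fact that $f_r$ includes the $j=r+1$ comparison whereas $f_{r+1}$ begins at $j=r+2$; pinning down exactly which summand carries the lost or gained factor of two in each case is the delicate point, and I would handle it by writing the two relevant summands explicitly and comparing them term by term under the substitution $(\tau_r \circ \sigma)^{-1}(k) = \sigma^{-1}(\tau_r(k))$.
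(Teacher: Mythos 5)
Your overall strategy---expand $S$ via Theorem \ref{f_i} and track how each exponent $f_i$ changes under left multiplication by $\tau_r$---is exactly the paper's, and your treatment of the indices $i\notin\{r,r+1\}$ is correct. But the central accounting step is wrong. You assert that $f_i$ is unchanged for all $i\neq r$, that the terms of $f_r$ with $j\geq r+2$ are preserved, and hence that the entire difference is carried by the single summand $2^{f_r(\sigma)}$ being halved (resp.\ by $2^{f_{r+1}(\sigma)}$ doubling). None of these hold in general: the \emph{left-hand} argument of the indicators defining $f_r$ changes from $\sigma^{-1}(r)$ to $\sigma^{-1}(r+1)$, and that of $f_{r+1}$ changes from $\sigma^{-1}(r+1)$ to $\sigma^{-1}(r)$; your ``swapped values appear together, so the sum is invariant'' argument only applies when the swapped values occur on the \emph{right} of the inequality, i.e.\ only for $i<r$. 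Concretely, take $n=3$, $r=1$, $\sigma=(23)$, so $\sigma^{-1}(1)=1$, $\sigma^{-1}(2)=3$, $\sigma^{-1}(3)=2$ and $\sigma^{-1}(1)<\sigma^{-1}(2)$. Then $f_1(\sigma)=2$, $f_2(\sigma)=0$, $S(\sigma)=6$, while $f_1(\tau_1\circ\sigma)=0$, $f_2(\tau_1\circ\sigma)=1$, $S(\tau_1\circ\sigma)=4$. Here $f_1$ drops by $2$ (not $1$) and $f_2$ changes too; the summand $2^{f_1}$ loses $3$ and the summand $2^{f_2}$ gains $1$. The theorem still gives the right answer ($-2=-2^{f_1(\sigma)-1}$), but only because the two changes compensate---your bookkeeping, taken at face value, would predict a loss of $2^{f_1(\sigma)-1}=2$ in the first summand alone and no change elsewhere, which is false term by term.

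The correct bookkeeping, which is what the paper's proof does, is that $f_r$ and $f_{r+1}$ essentially \emph{exchange} values: $f_r(\tau_r\circ\sigma)=f_{r+1}(\sigma)+[\sigma^{-1}(r+1)\leq\sigma^{-1}(r)]$ and $f_{r+1}(\tau_r\circ\sigma)=f_r(\sigma)-[\sigma^{-1}(r)\leq\sigma^{-1}(r+1)]$, because after the swap $f_r$ compares $\sigma^{-1}(r+1)$ against the tail $j\geq r+2$ (plus the boundary term $j=r+1$), while $f_{r+1}$ compares $\sigma^{-1}(r)$ against that same tail. It is the change in the pair $2^{f_r}+2^{f_{r+1}}$, namely $2^{f_r(\sigma)}\bigl(2^{-[\sigma^{-1}(r)\leq\sigma^{-1}(r+1)]}-1\bigr)+2^{f_{r+1}(\sigma)}\bigl(2^{[\sigma^{-1}(r+1)\leq\sigma^{-1}(r)]}-1\bigr)$, that collapses to $-2^{f_r(\sigma)-1}$ in one case and to $+2^{f_{r+1}(\sigma)}$ in the other. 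Replacing your ``only the $j=r+1$ indicator flips'' claim with this exchange identity repairs the proof and also explains the asymmetry between the two cases that you flagged as delicate.
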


\begin{proof}
Since $(\tau \circ \sigma)^{-1}(r) = \sigma^{-1}(r+1)$ and
$(\tau \circ \sigma)^{-1}(j) = \left\{\begin{matrix}
\sigma^{-1}(r) & if \hspace{5pt} j=r+1 \\ 
\sigma^{-1}(j) & if \hspace{5pt} j > r+1
\end{matrix}\right.$
for $j \geq r+1$ it follows that 

\begin{align*}
f_r (\tau \circ \sigma) &= \sum_{j=r+1}^n [(\tau \circ \sigma)^{-1}(r) \leq (\tau \circ \sigma)^{-1}(j) ] \\
&= [\sigma^{-1}(r+1) \leq \sigma^{-1}(r)] + f_{r+1}(\sigma)
\end{align*}

Since $(\tau \circ \sigma)^{-1}(r+1) = \sigma^{-1}(r)$ and $r+1<j \Rightarrow (\tau \circ \sigma)^{-1}(j) = \sigma^{-1}(j)$
it follows that

\begin{align*}
\hspace{-50pt}
f_{r+1}(\tau \circ \sigma) =& \sum_{j=r+2}^n [(\tau \circ \sigma)^{-1}(r+1) \leq (\tau \circ \sigma)^{-1}(j) ] \\ &+ [ \sigma^{-1}(r) \leq \sigma^{-1}(r+1) ] - [ \sigma^{-1}(r) \leq \sigma^{-1}(r+1)] \\
=& f_r (\sigma) - [ \sigma^{-1}(r) \leq \sigma^{-1}(r+1)] 
\end{align*}

Thus, we have
\begin{align*}
\hspace{-20pt}
S (\tau \circ \sigma) &= \sum_{i = 1}^{r-1} 2^{f_i(\sigma)} + 2^{f_r(\tau \circ \sigma)} + 2^{f_{r+1}(\tau \circ \sigma)} + \sum_{i = r+2}^{n} 2^{f_i(\sigma)} \\
=&\hspace{6pt} S(\sigma) + 2^{f_r(\sigma)} (2^{-[\sigma^{-1}(r) \leq \sigma^{-1}(r+1)]} - 1) \\
&+ 2^{f_{r+1}(\sigma)} (2^{[\sigma^{-1}(r+1) \leq \sigma^{-1}(r)]} - 1).
\end{align*}

\end{proof}

As a direct corollary, using Corollary \ref{cor:decodingdist}:

\begin{mycor}
Let $\sigma, \tau \in S_n$ where $\tau = (r,r+1)$. Then,
\[ d_{dec} (1, \tau \circ \sigma) - d_{dec} (1, \sigma) = \left\{\begin{matrix}
\dfrac{2^{f_r(\sigma) - 1}}{2^n-1} & \hspace{5pt} if \hspace{5pt} \sigma^{-1} (r) < \sigma^{-1} (r+1) \\[10pt] 
- \dfrac{2^{f_{r+1}(\sigma)}}{2^n-1} & \hspace{5pt} if \hspace{5pt} \sigma^{-1} (r) > \sigma^{-1} (r+1) 
\end{matrix}\right. \]
\end{mycor}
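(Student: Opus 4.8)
The plan is to treat this as a one-line consequence of Theorem \ref{teo:trans} once the decoding distance from the identity has been rewritten purely in terms of the counting function $S$. First I would record the specialization of the distance to $\sigma=1$. By the very definition of $d_{dec}$ (Definition \ref{def:decdist}) and of $S(\psi):=S(1,\psi)$, the distance $d_{dec}(1,\psi)$ equals $1$ minus the probability that a uniformly chosen nonempty code $C$ satisfies $\widehat{Dec}_C(1)=\widehat{Dec}_C(\psi)$; since $S(\psi)$ counts exactly those codes and there are $2^n-1$ nonempty codes in total, this gives the clean formula $d_{dec}(1,\psi)=1-\frac{S(\psi)}{2^n-1}$. (Equivalently one may invoke Corollary \ref{cor:decodingdist} directly, which produces $S((\tau\circ\sigma)^{-1})$; reconciling this with $S(\tau\circ\sigma)$ just requires the inversion-invariance $S(\psi^{-1})=S(\psi)$, which itself drops out of the row-permutation invariance proposition $S(\tau\circ\sigma,\tau\circ\phi)=S(\sigma,\phi)$ by taking $\tau=\psi^{-1}$.)

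With that in hand, the second step is pure subtraction. Applying the formula to both $\tau\circ\sigma$ and $\sigma$ and cancelling the constant term yields
\[
d_{dec}(1,\tau\circ\sigma)-d_{dec}(1,\sigma)=-\frac{S(\tau\circ\sigma)-S(\sigma)}{2^n-1}.
\]
The only remaining work is to insert the explicit value of $S(\tau\circ\sigma)-S(\sigma)$ furnished by Theorem \ref{teo:trans}, handling its two cases separately. When $\sigma^{-1}(r)<\sigma^{-1}(r+1)$ the theorem gives $S(\tau\circ\sigma)-S(\sigma)=-2^{f_r(\sigma)-1}$, so the overall minus sign turns this into $+\frac{2^{f_r(\sigma)-1}}{2^n-1}$; when $\sigma^{-1}(r)>\sigma^{-1}(r+1)$ the theorem gives $+2^{f_{r+1}(\sigma)}$, producing $-\frac{2^{f_{r+1}(\sigma)}}{2^n-1}$. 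These are precisely the two branches in the statement.

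There is essentially no obstacle here: the result is labelled a direct corollary and the two sign flips are mechanical. The single point worth stating carefully is the reduction in the first step, namely that $d_{dec}(1,\psi)=1-S(\psi)/(2^n-1)$ with the \emph{non-inverted} argument $\psi$; I would present it via the definition of $d_{dec}$ together with $S(\psi)=S(1,\psi)$ so as to sidestep any bookkeeping about $\psi$ versus $\psi^{-1}$, and only mention the inversion-invariance as an aside for readers who prefer to route through Corollary \ref{cor:decodingdist}.
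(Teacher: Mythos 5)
Your proposal is correct and is exactly the route the paper intends: the result is stated as a direct consequence of Corollary \ref{cor:decodingdist} (equivalently, $d_{dec}(1,\psi)=1-S(\psi)/(2^n-1)$) combined with Theorem \ref{teo:trans}, and your two sign computations match both branches. Your side remark resolving the $S(\psi)$ versus $S(\psi^{-1})$ bookkeeping via the invariance $S(\tau\circ\sigma,\tau\circ\phi)=S(\sigma,\phi)$ is a small but welcome clarification that the paper leaves implicit.
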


\section{A Distance Between Stable Channels} \label{sec:discha}

In this section we extend the results of the previous one to define a distance (in some sense) between channels. As in the last section we will only consider the case where the decoding cones are full dimensional, i.e. the channel is stable.

We could define a distance by setting $d(P,Q) = 1 - Pr ( \hspace{3pt}  {D\widehat{e}c}_C (P) = {D\widehat{e}c}_C (Q)  \hspace{3pt}  )$, but we will see that a more refined distance can be defined.

Consider three channels $P,Q,R \in {Cha}_3$ such that
\[ O^- P = \begin{pmatrix}
1 & 3 & 3 \\ 
2 & 1 & 2 \\ 
3 & 2 & 1
\end{pmatrix} \hspace{7pt}
O^- Q = \begin{pmatrix}
1 & 3 & 2 \\ 
2 & 1 & 3 \\ 
3 & 2 & 1
\end{pmatrix} \hspace{7pt}
O^- R = \begin{pmatrix}
2 & 3 & 2 \\ 
1 & 1 & 3 \\ 
3 & 2 & 1
\end{pmatrix}
.\]

One can check, by doing all possible computations, that $d(P,Q) = d(P,R) = d(Q,R)= \frac{4}{7}$. But $Q$ differs from $P$ in only one position of a single column, while $R$ differs from $P$ in one position in two different columns. If $y_1$ or $y_2$ (the output messages corresponding, respectively, to the first and second columns) is received $P$ and $Q$ are essentially the same channel. Intuitively, we expect $Q$ to be closer to $P$ than $R$. 

This distance does not use the fact that the received message will be known at the time of decoding. We will use this fact to define a more refined distance.

If we assume that the transmission is made through the channel $P$, and denote by $Q_y$ the column corresponding to the received message $y$ in $Q$, we can calculate  $Pr ( \hspace{3pt}  {D\widehat{e}c}_C (P_y) = {D\widehat{e}c}_C (Q_y)\hspace{2pt}  )$, the probability that both decoders will be equal when a message $y$ is received\footnote{In this case both the code $C$ and the message $y$ are random variables.}. With this we can define the following distance:

\begin{mydef}\label{defi12}
Let $P,Q  \in {Ch}_{n\times  m}$ and assume that $P$ is the channel being used. The radial decoding distance to $Q$ centered in $P$ is given by
\[d_{dec}^P (Q) = 1 - Pr (\hspace{3pt} {D\widehat{e}c}_C (P_y) = {D\widehat{e}c}_C (Q_y) \hspace{3pt} )  .\]
\end{mydef}

The next theorem shows how to compute this distance.

\begin{myteo}\label{teo:cha}

Let $P, Q \in {Ch}_{n\times  m}$ and $\sigma, \phi \in S_n^m$ be such that $\sigma_i$ and $\phi_i$ correspond to the ordering in the $i$-th column of $O^- P$ and $O^- Q$, respectively. Suppose that the channel being used is $P$. If a code $C \subseteq X$ is picked uniformly distributed from the space of all codes, then

\[ Pr ( \hspace{3pt}  {D\widehat{e}c}_C (P_y) = {D\widehat{e}c}_C (Q_y)  \hspace{3pt}  ) =  \frac{1}{n(2^n - 1)}  \sum_{i = 1}^m S(\sigma_i, \phi_i) \left \Vert P_i \right \Vert_1  \]
where $\left \Vert P_i \right \Vert_1:=\sum_{j=1}^nP_{ji}$ is the $1$-norm of the $i$-th column of $P$.

\end{myteo}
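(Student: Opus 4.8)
\section*{Proof proposal for Theorem \ref{teo:cha}}

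The plan is to condition on the received message $y$ and thereby reduce this two–column problem to the single–column probability already computed. Since we assume the channel in use is $P$, I would make the underlying randomness explicit as follows: an input message is drawn uniformly from $X=\{x_1,\dots,x_n\}$ and transmitted through $P$ to produce $y$, while the code $C$ is drawn uniformly and independently from the $2^n-1$ nonempty subsets of $X$. The decisive structural feature of this model is that $y$ and $C$ are independent, which is what lets the two sources of randomness separate in the final sum.

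First I would compute the law of the received message. Under a uniform input over $X$, the probability of receiving $y_i$ is
\[ Pr(y=y_i)=\sum_{j=1}^n \frac{1}{n}P_{ji}=\frac{1}{n}\left\Vert P_i\right\Vert_1, \]
and since $\sum_{i=1}^m Pr(y=y_i)=\tfrac{1}{n}\sum_{j}\sum_{i}P_{ji}=1$ this is genuinely a probability distribution on $Y$. Next, conditioning on $y=y_i$, only the $i$-th columns $P_i$ and $Q_i$ matter for decoding; because the channels are stable their weak orders are the honest permutations $\sigma_i,\phi_i\in S_n$ appearing in the hypothesis. As $C$ is independent of $y$, conditioning on $y=y_i$ leaves the distribution of $C$ unchanged, so the earlier identity $Pr({D\widehat{e}c}_C(\sigma)={D\widehat{e}c}_C(\phi))=S(\sigma,\phi)/(2^n-1)$ (equivalently Corollary \ref{cor:decodingdist}) gives
\[ Pr\big({D\widehat{e}c}_C(P_{y_i})={D\widehat{e}c}_C(Q_{y_i})\ \big|\ y=y_i\big)=\frac{S(\sigma_i,\phi_i)}{2^n-1}. \]

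Finally, the law of total probability over the $m$ possible received messages yields
\[ Pr\big({D\widehat{e}c}_C(P_y)={D\widehat{e}c}_C(Q_y)\big)=\sum_{i=1}^m \frac{\left\Vert P_i\right\Vert_1}{n}\cdot\frac{S(\sigma_i,\phi_i)}{2^n-1}=\frac{1}{n(2^n-1)}\sum_{i=1}^m S(\sigma_i,\phi_i)\left\Vert P_i\right\Vert_1, \]
which is the claimed formula. The main obstacle here is conceptual rather than computational: one must pin down the joint law of $(y,C)$ so that it factors. The weight $\left\Vert P_i\right\Vert_1/n$ is precisely the chance of receiving $y_i$ under uniform input, and the independence of $C$ from $y$ is exactly what permits pulling the per-column probability $S(\sigma_i,\phi_i)/(2^n-1)$ out of each conditional term; had the input instead been taken uniform over $C$, then $y$ and $C$ would be dependent and this clean factorization would fail. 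Once the modeling assumption is fixed, the statement is an immediate application of the law of total probability together with the single-column theorem.
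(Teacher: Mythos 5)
Your proposal is correct and follows essentially the same route as the paper: condition on the received message $y_i$, apply the single-column result $S(\sigma_i,\phi_i)/(2^n-1)$, and weight by $Pr(y_i\ \text{received})=\left\Vert P_i\right\Vert_1/n$ via the law of total probability. Your only addition is to make explicit the modeling assumptions (uniform input over $X$ and independence of $C$ from $y$) that the paper's proof leaves implicit, which is a welcome clarification rather than a deviation.
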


\begin{proof}

\begin{align*}\hspace{-28pt}
Pr ( \hspace{3pt}  {D\widehat{e}c}_C (P_y) = {D\widehat{e}c}_C (Q_y)  \hspace{2pt}  )  &= \sum_{i = 1}^m Pr ( \hspace{2pt}  {D\widehat{e}c}_C (P_y) = {D\widehat{e}c}_C (Q_y) \vert \hspace{2pt} y_i \hspace{2pt} \text{received} ) Pr (y_i \hspace{2pt} \text{received}) \\
&= \sum_{i=1}^m \frac{S(\sigma_i, \phi_i)}{2^n - 1} \sum_{j=1}^n Pr (y_i \hspace{2pt} \text{received} \hspace{3pt} \vert x_j \hspace{2pt} \text{sent}) Pr(x_j \hspace{2pt} \text{sent}) \\
&= \sum_{i=1}^m \frac{S(\sigma_i, \phi_i)}{2^n - 1} \left \Vert P_i \right \Vert_1 \frac{1}{n}
\end{align*}

\end{proof}

In the hypothesis of Theorem \ref{teo:cha} we assume that one of the channels is the correct one. This occurs because the expression depends on the probability of receiving $y$ which may not coincide for different channels.

\begin{mycor} \label{cor:disrad}
Let $P,Q  \in {Ch}_{n\times  m}$ and assume that $P$ is the channel being used. The radial decoding distance to $Q$ centered in $P$ is given by
\[d_{dec}^P (Q) = 1 - \frac{1}{n(2^n - 1)}  \sum_{i = 1}^m S(\sigma_i, \phi_i) \left \Vert P_i \right \Vert_1  \]
\end{mycor}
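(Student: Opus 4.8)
The plan is to recognize that this statement is a direct corollary obtained by substituting Theorem~\ref{teo:cha} into Definition~\ref{defi12}; essentially all of the work has already been carried out in proving that theorem. First I would recall that, by Definition~\ref{defi12}, the radial decoding distance is
\[
d_{dec}^P(Q) = 1 - Pr(\,{D\widehat{e}c}_C(P_y) = {D\widehat{e}c}_C(Q_y)\,),
\]
under the standing assumptions that $P$ is the channel being used and that the code $C \subseteq X$ is drawn uniformly from the space of all codes. These are precisely the hypotheses of Theorem~\ref{teo:cha}, so I may invoke its conclusion verbatim.

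Next I would substitute the closed-form expression supplied by Theorem~\ref{teo:cha},
\[
Pr(\,{D\widehat{e}c}_C(P_y) = {D\widehat{e}c}_C(Q_y)\,) = \frac{1}{n(2^n-1)} \sum_{i=1}^m S(\sigma_i,\phi_i)\,\|P_i\|_1,
\]
directly into the display above, which yields the claimed formula with no further manipulation. Here $\sigma_i$ and $\phi_i$ are the permutations recording the orderings of the $i$-th columns of $O^-P$ and $O^-Q$, exactly as fixed in the statement of Theorem~\ref{teo:cha}, and $\|P_i\|_1 = \sum_{j=1}^n P_{ji}$ is the column sum appearing there.

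Since the derivation is a single substitution, there is no genuine obstacle at the level of the corollary itself. The only point requiring care is to confirm that the probabilistic event in Definition~\ref{defi12} coincides exactly---same conditioning on $P$ being the transmitting channel, same uniform law on codes---with the event computed in Theorem~\ref{teo:cha}; once this matching of hypotheses is verified, the result is immediate. The conceptual content, namely the law-of-total-probability decomposition over received symbols together with the per-column count $S(\sigma_i,\phi_i)/(2^n-1)$ inherited from the single-column analysis, lives entirely inside the proof of Theorem~\ref{teo:cha}.
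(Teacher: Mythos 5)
Your proposal is correct and matches the paper's intent exactly: the paper states this as an immediate corollary of Theorem~\ref{teo:cha} substituted into Definition~\ref{defi12}, with no further argument given. Your additional care in checking that the hypotheses (uniform code distribution, $P$ as the transmitting channel) match between the definition and the theorem is appropriate but does not change the route.
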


We go back to the example discussed in the beggining of this section.

\begin{myexe} \label{exe:conta}

Suppose a channel $P = \begin{pmatrix}
\frac{5}{8} & \frac{1}{8} & \frac{2}{8} \\[1ex]
\frac{2}{8} & \frac{5}{8} & \frac{1}{8} \\[1ex]
\frac{1}{8} & \frac{2}{8} & \frac{5}{8}
\end{pmatrix}$
is used for transmission and $Q,R \in {Cha}_{3}$ are such that 
\[ O^- Q = \begin{pmatrix}
1 & 3 & 3 \\ 
2 & 1 & 2 \\ 
3 & 2 & 1
\end{pmatrix}
\hspace{5pt}
\text{and} 
\hspace{5pt}
O^- R = \begin{pmatrix}
	2 & 3 & 2 \\ 
	1 & 1 & 3 \\ 
	3 & 2 & 1
\end{pmatrix} .\]

Then, by Theorem \ref{teo:cha},
\begin{align*}
Pr ({D\widehat{e}c}_C (P_y) = {D\widehat{e}c}_C (Q_y)  &= \frac{1}{21} (7 + 7 + 4) = \frac{6}{7}
\end{align*} and
\begin{align*}
Pr ({D\widehat{e}c}_C (P_y) = {D\widehat{e}c}_C (R_y))  =  \frac{1}{21} (5 + 7 + 4) = \frac{16}{21}.
\end{align*}
Thus, $${d_{dec}^P(Q) = \dfrac{1}{7} < \dfrac{5}{21} = d_{dec}^P(R)}.$$

We note that this difference is, intuitively, compatible with the simple observation that $Q$ differs from $P$ in only one position of a single column, while $R$ differs from $P$ in one position in two different columns.
\end{myexe}

The decoding distance presented in Definition \ref{def:decdist} of the previous section was symmetric and only depended on the equivalence classes of the permutations. In contrast, the radial decoding distance to $Q$ centered in $P$ is not symmetrical, and although it only depends on the equivalence class of $Q$, it depends on the internal structure of $P$, i.e. if $Q \sim Q'$, then $d_{dec}^P(Q)=d_{dec}^P(Q')$, but $P\sim P'$ does not necessarily imply that $d_{dec}^P(Q)=d_{dec}^{P'}(Q)$.

\section{Discussion}

In this work, we gave an explicit expression for a meaningful distance in the space of all channels over given input and output sets. This establishes the ground to study the details of what can be a kind of finite
	approximation approach to channels and decodification problems. A family of
	questions that arise in this context are the following: Let $\overline
	{Ch_{n\times m}}=Ch_{n\times m} /\sim$ be the set of decoding cones and
	let $A\subset\overline{Ch_{n\times m}}$ be a subset of channels with some
	interesting property (for example, the set of channels that admits syndrome
	decoding). If we want to approximate a channel $P\in Ch_{n\times m}$ by a
	decoding cone in $A$, how much (in terms of decoding errors) should we expect
	to lose? From Corollary 3, we are actually interested in determining
	$\max\left\{  d_{dec}^{P}\left(  Q\right)  ;Q\in\overline{Ch_{n\times m}%
	}\right\}  $. Asymptotic versions arise naturally as we consider a family of
	increasing (in terms of $n=\left\vert X\right\vert $  and $m=\left\vert
	Y\right\vert $) input and output sets.  
	
	This approach is similar to the one adopted in the study of mismatched channels as, for example, in \cite{gant00}. The approach used in this (and other works studying mismatched channels) rests on the determination of achievable rates, that is, in proving that, for $n$ sufficiently large \textit{there are} codes that can be decoded with the approximating channel with no significant loss, that is, with probability of mis-decoding approaching $0$. In our approach we are not looking at this family of codes (asymptotically the  best choice of code for the mismatched channel), but on the average loss while choosing sequences of codes with a given rate.
	
	\bigskip
	
	We also stress that \textit{any} prescribed deterministic decision rule can be seen as a maximum likelihood decoding rule of some channel (actually an equivalence class of channels), as can be seen, for example, in \cite{fire14}.
	
	\bigskip
	
	Besides that, we remark that we considered the case of stable channels,
	i.e. the case of a decoding cone $cone(P)$ that is determined by a set of
	strict inequalities. An unstable (non-full dimensional) cone $Cone(P)$ is determined by a set of inequalities and a non-empty set of equalities, or, in other words, $Order(P)$ contains equivalences. It inherits its decoders from its full dimensional neighbours, that is, cones corresponding to stable channels in which every inequality of $Cone(P)$ also holds.

	Finding explicit expressions for a distance on the set of all decoding cones, both stable and unstable, is technically more challenging.

%
%

\bigskip

\renewcommand{\abstractname}{Acknowledgements:}
\begin{abstract}

Rafael G.L. D'Oliveira was supported by CAPES.
Marcelo Firer was partially supported by  S\~{a}o Paulo
Research Foundation, (FAPESP grant 2013/25977-7). 
\end{abstract}

\end{document}